\newcommand\sasa[1]{\textcolor{black}{#1}}
\newcommand\ro[1]{\textcolor{black}{#1}}
\newcommand\roA[1]{\textcolor{black}{#1}}
\newcommand\roB[1]{\textcolor{black}{#1}}
\newcommand\roD[1]{\textcolor{black}{#1}}
\newcommand\roE[1]{\textcolor{black}{#1}}
\newcommand\roF[1]{\textcolor{black}{#1}}
\begin{document}

\title{Generalized k-Cell Decomposition for Visibility Planning in Polygons}
%
%
%
%



\title{Generalized k-Cell Decomposition for Visibility Planning in Polygons}

\author{%
  \begin{tabular}{cc}
    \begin{tabular}[t]{c} 
      1\textsuperscript{st} Yeganeh Bahoo \textit{Computer Science} \\
      Toronto Metropolitan University \\
      Toronto, Canada \\
      0000-0001-5349-494
    \end{tabular}
    &
    \begin{tabular}[t]{c}
      2\textsuperscript{nd} Sajad Sacedi \textit{Computer Science} \\
      University College London \\
      London, UK \\
      0000-0002-6385-6127
    \end{tabular}
    \\[3ex] 
    \multicolumn{2}{c}{
      \begin{tabular}{c}
        3\textsuperscript{rd} Roni Sherman \textit{Computer Science} \\
        Toronto Metropolitan University \\
        Toronto, Canada \\
        0009-0004-0542-3480
      \end{tabular}
    }
  \end{tabular}
}
\maketitle

%

%


\begin{abstract}

This paper introduces a novel $k$-cell decomposition method for pursuit-evasion problems in polygonal environments, where a searcher is equipped with a $k$-modem: a device capable of seeing through up to $k$ walls. The proposed decomposition ensures that as the searcher moves within a cell, the structure of unseen regions (shadows) remains unchanged, thereby \ro{preventing any geometric events between or on invisible regions, that is,} preventing the appearance, disappearance, merge, or split of shadow regions. The method extends existing work on $0$- and $2$-visibility by incorporating m-visibility polygons for all even $0 \le m \le k$, constructing partition lines that enable robust environment division. \roF{The correctness of the decomposition is proved} via three theorems. 
The decomposition enables reliable path planning for intruder detection in simulated environments and opens new avenues for visibility-based robotic surveillance. \roF{The difficulty in constructing the cells of the decomposition consists in computing the $k$-visibility polygon from each vertex and finding the intersection points of the partition lines to create the cells.}

\end{abstract}

\section{Introduction}
Pursuit Evasion~\cite{ChungHollingerIsler11} is a famous problem in computational geometry, especially for robotics and game theory. Pursuit evasion involves one or more pursuers attempting to locate or catch one or more evaders in some environment. There are four main parameters which describe pursuit evasion: the type of environment, the number of pursuers and evaders, the speed of the pursuers and evaders, and the field of vision of the pursuers and evaders. In Pursuit Evasion games, there is also always a strategy involved \roF{in how the pursuer attempts to capture the evader and the evader tries to avoid capture. }

The contribution of this work is the development of a general $k$-cell decomposition method that guarantees stability of shadow regions during intra-cell movement. We provide a formal proof that no appear, disappear, split, or merge events occur within any cell, extending previous work limited to specific $k$-values. 

%

\roD{The main difference between this paper and~\cite{Bahoo2013} is that this paper generalizes the idea for $2$-visibility cell decomposition of the polygons to $k$-visibility with a proof of correctness that extends the two proofs in~\cite{Bahoo2013}. It should be noted that this generalization was not trivial. The proof of correctness for vertex-shadow and edge-shadow has been modified to work for $k$-visibility, with the proof of edge-shadow being condensed.}
\sasa{
The rest of the paper is organized as follows: 
Sec.~\ref{sec:bck} presents the background material. 
Sec.~\ref{sec:method} describes the method. 
Sec.~\ref{sec:conclusions} concludes the paper.
}


\section{Background}\label{sec:bck}
Consider a simple polygon $P$ in $2D$ with some evaders and a pursuer moving inside of it. \roA{The entire map is known. The pursuers and evader have
infinite speed. The evader knows the pursuer’s location and all of its movements. The pursuers and
evaders move continuously within the polygon.} As the pursuer moves, parts of the polygon might become invisible to it. 
Two points $p$ and $q$ are said to be visible when the line segment $pq$ does not intersect the polygon. This is shown in Figure~\ref{fig:visibility}. In particular, we are interested in $k$-visibility, where two points $p$ and $q$ are said to be $k$-visible when the segment $pq$ intersects the polygon at most $k$ times. An example of this is shown in Figure~\ref{fig:k-visibility}, where two walls ($k=2$) separate $p$ and $q$.

Let $p$ be the initial position of the pursuer, an arbitrary point within the polygon $P$. Each maximal connected set of points within the polygon $P$ that is invisible to $p$ forms what is called a shadow of $p$. The shadows of $p$ are sub-polygons of $P$, denoted by $S_i(p)$. As the pursuer (searcher) moves continuously in $P$, four geometric events may occur for its shadow: merge, split, appear, or  disappear~\cite{yu2011shadow} \roE{(See Figure~\ref{fig:four_events})}.  


The challenge is to develop a cell decomposition such that when a searcher moves within a cell, none of the four events \roF{occur}\ro{~\cite{Guiba1997}.
To calculate the $k$ visibility polygon, there are three algorithms available. Martins et al.~\cite{Martins2009} presented a $O(n^2)$ algorithm. Bahoo et al.~\cite{BahooYeganeh2020CtkR} improved upon this with a $O(nlog(n))$ algorithm. An even faster algorithm by Bahoo et al.~\cite{BahooYeganeh2020CtkR} (for values of $k < log(n)$) runs in $O(kn)$}.






\begin{figure}[h]
\centering
\begin{subfigure}[b]{.35\linewidth}
\includegraphics[width=\linewidth]{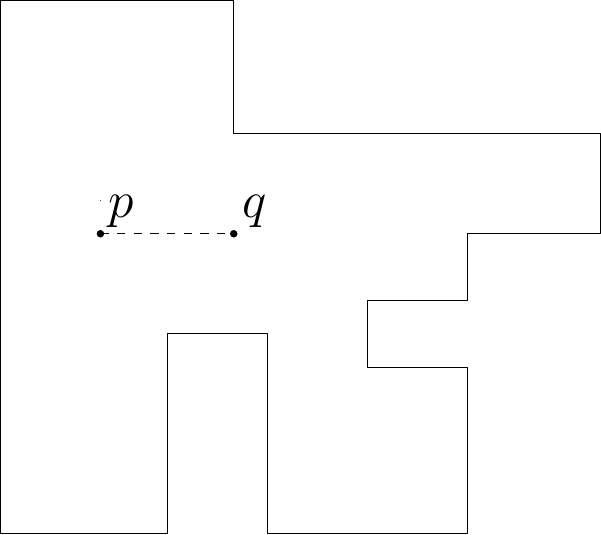}
\caption{0-visibility}\label{fig:visibility}
\end{subfigure}
\begin{subfigure}[b]
{.35\linewidth}
\includegraphics[width=\linewidth]{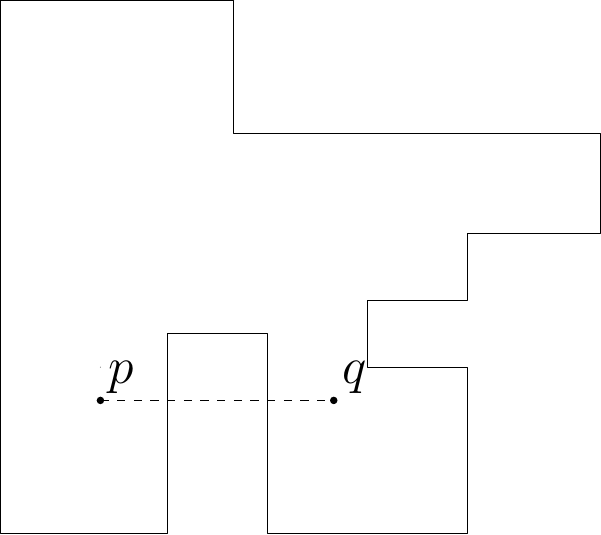}
\caption{2-visibility}\label{fig:k-visibility}
\end{subfigure}

\caption{Visibility between two points $p$ and $q$}
\label{fig:four_events}
\end{figure}

\begin{figure}
\centering
\begin{subfigure}[b]{.25\linewidth}
\includegraphics[width=\linewidth]{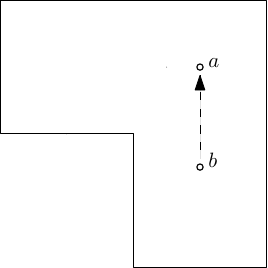}
\caption{Disappear}\label{fig:disappear}
\end{subfigure}
\begin{subfigure}[b]{.25\linewidth}
\includegraphics[width=\linewidth]{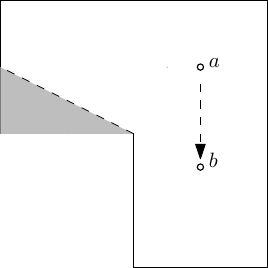}
\caption{Appear}\label{fig:appear}
\end{subfigure}

\begin{subfigure}[b]{.25\linewidth}
\includegraphics[width=\linewidth]{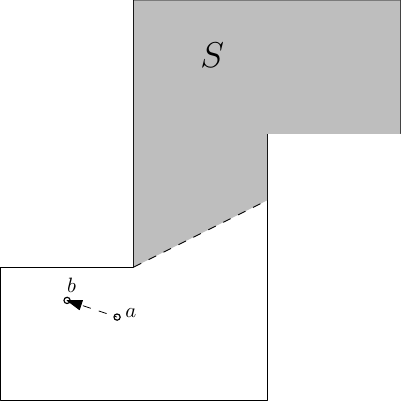}
\caption{Merge}\label{fig:merge}
\end{subfigure}
\begin{subfigure}[b]{.25\linewidth}
\includegraphics[width=\linewidth]{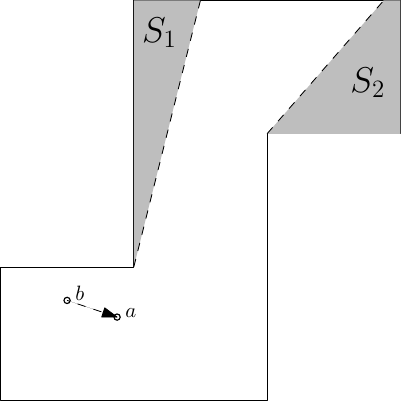}
\caption{Split}\label{fig:split}
\end{subfigure}
\caption{The four geometric events that may occur to the pursuer's shadow as the pursuer moves within the polygon}
\label{fig:four_events}
\end{figure}
\section{Proposed Method}\label{sec:method}

First the cell decomposition is presented, and \ro{then a proof of correctness \roF{is} proposed}.
The proof consists of three theorems.

\subsection{The K-Cell Decomposition}
 The $k$-cell decomposition is created by computing all even-valued visibility polygons, from 0-visibility to $k$-visibility, \roB{at} each vertex of $P$. We then use the lines that define each polygon as the partition lines. In other words, we calculate the following visibility polygons:
\begin{itemize}
    \item Lines of the $k$-visibility polygon of each vertex. 
    \item Lines of the $k-2$-visibility polygons of each vertex.
    \item Lines of the $k-4$-visibility polygon of each vertex. 
    ...
    \item Lines of the $0$-visibility polygon of each vertex.
\end{itemize}


An example cell decomposition is presented in   Figure~\ref{fig:cell_decomp_full_path_planning} : 

\begin{figure}[h]
    \centering
    \includegraphics[width=.55\columnwidth]{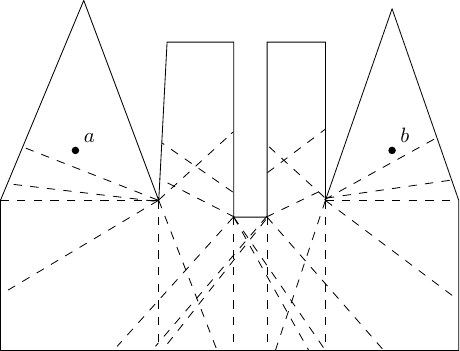}
    \caption[A cell decomposition with lines drawn from every vertex]{The cell decomposition with all decomposition lines drawn from every vertex, \ro{for $k=2$} }
    \label{fig:cell_decomp_full_path_planning}
\end{figure}


\roD{The intuition behind this is that a vertex always participates in an event, therefore considering all combinations of vertices at all values of $k$ gives an upper bound solution.}

\subsection{Proof of Correctness of the K-Cell Decomposition}
 \ro{The proof consists of two theorems for two different types of shadow, and one more theorem which summarizes the two. 
 }
\begin{definition}
 A type 1 shadow (vertex shadow) is defined as a shadow that contains a vertex \cite{Bahoo2013}. See Figure~\ref{fig:vertex_shadow}.
\end{definition}

\begin{definition}
A type 2 shadow (edge shadow) is a shadow that does not include a vertex \cite{Bahoo2013}. This shadow occurs only between edges. See Figure~\ref{fig:edge_shadow}. \roB{Note that an edge shadow cannot occur for the case of $k=0$.}
\end{definition}
\begin{definition}
     A partition line is a line of the cell decomposition (which partitions the space into cells)
\end{definition}
\begin{definition}
    A vertex $a$ that is critical to some point $b$ is a vertex with both of its edges to one side of the line $ab$.
\end{definition}

 \begin{figure}[h]
    \centering
    \includegraphics[width=0.35\columnwidth]{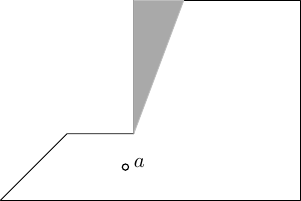}
    \caption{The shadow of type 1 - a shadow that includes a vertex}
    \label{fig:vertex_shadow}
\end{figure}
\begin{theorem}
    (Invariance of type 1 Shadow)
When an agent moves in a cell $C$, the combinatorial representation of the shadow regions of type 1 remains unchanged.
\end{theorem}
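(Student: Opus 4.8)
The plan is to prove the invariance by contraposition: I will show that any change in the combinatorial representation of a vertex shadow forces the agent to cross at least one partition line, so that a motion confined to the interior of a single cell $C$ can produce no such change. Throughout I use the fact that for two points in the interior of the simple polygon $P$ the connecting segment meets the boundary an even number of times; consequently the crossing count that governs $k$-visibility is always even, and the decomposition's use of the even-valued visibility polygons $0,2,\dots,k$ is exactly what is needed to track it.

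First I would pin down what the combinatorial representation of a type 1 shadow $S_i(p)$ records: the vertices of $P$ it contains, the polygon edges lying on its boundary, and the windows---chords supported by partition lines---that separate it from the visible region. Each window is anchored at a vertex $v$ critical to $p$, and it appears at the least even level $m$ for which the line of sight from $p$ grazing $v$ has already accumulated $m$ crossings. I would then isolate the two mechanisms by which this data can change: (a) a vertex enters or leaves $S_i(p)$, which happens precisely when its crossing count from $p$ passes the threshold $k$; and (b) the set or the cyclic order of the window-generating critical vertices changes, which happens when the number of walls between $p$ and some critical vertex $v$ changes, thereby promoting or demoting the level at which $v$ generates a window.

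For mechanism (a) I would invoke the symmetry of $k$-visibility: since the segment $pv$ has the same crossing count seen from either endpoint, the count from $p$ to a vertex $v$ reaches the threshold exactly when $p$ crosses the boundary of the $k$-visibility polygon computed from $v$, which is a partition line by construction. For mechanism (b), a change in the wall-count between $p$ and $v$ by the necessarily even amount requires $p$ to cross the boundary of some even-level visibility polygon $m<k$ of a vertex; again, by symmetry this boundary is the defining line of the $m$-visibility polygon of that vertex and hence a partition line. Since a cell $C$ is by construction an open region whose interior meets no partition line, a continuous motion of the agent inside $C$ triggers neither mechanism, so no appear, disappear, merge, or split event occurs and the representation is invariant.

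I expect the crux to be the rigorous treatment of mechanism (b) for general $k$: with a $k$-modem the windows bounding a shadow are built from critical vertices sitting at several nested visibility layers, and I must show that every promotion or demotion of a critical vertex between consecutive even layers is genuinely detected by one of the included even-level visibility polygons, and that no intermediate (odd-level) event can slip through. Formalizing how the crossing count evolves as the line of sight sweeps past a critical vertex, and using parity to rule out odd-level transitions, is the delicate generalization of the $k=2$ argument of \cite{Bahoo2013}, and it is where the bulk of the work will lie.
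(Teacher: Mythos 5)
Your mechanism (a) is, almost verbatim, the paper's entire proof of this theorem: the paper takes $p,q$ in the same cell $C$ and a vertex $v$ in the shadow of $p$, assumes for contradiction that $v$ is visible to $q$, and concludes that moving from $p$ to $q$ forces a crossing of a decomposition line --- implicitly the boundary of the $k$-visibility polygon computed from $v$, which is a partition line by construction; this is exactly the symmetry argument you make explicit. Where you diverge is mechanism (b): the paper's proof of Theorem 1 does not track window-generating critical vertices or their even levels at all; it effectively treats the combinatorial representation of a type 1 shadow as nothing more than the set of $k$-invisible vertices, so a vertex's visibility status is the only thing it checks. The concern you single out as the crux --- promotion/demotion of critical vertices across layers, and the resulting merges and splits --- is handled by the paper not here but in its Lemma 3 (stated on the way to Theorem 2), where a split is attributed to a new critical vertex $c$ becoming visible, hence to a crossing of the partition line from $c$'s $k$-visibility polygon, again a one-line symmetry argument. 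So your plan is sound and strictly more thorough than the paper's own proof of this statement; what it buys is a self-contained invariance statement that genuinely covers all four events for vertex shadows, at the cost of taking on the delicate layer-by-layer analysis that the paper disperses into later lemmas. One caution for when you flesh out mechanism (b): the window structure can also change when a vertex becomes or ceases to be \emph{critical} to $p$ (the line of sight aligning with one of its incident edges) without any change in the wall count, so your case analysis should cover crossings of edge-extension lines as well as count-threshold lines; these are captured by the $0$-visibility (and higher even-level) polygons of the adjacent vertices, but that needs to be said.
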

\begin{proof}
    Consider a point $p$ that is inside a cell 
    $C$, a point $q$ that is also inside the cell $C$, and a vertex $v$ of the polygon that is invisible to $p$, i.e., in the shadow of $p$ (shadow of type 1) (see Figure~\ref{fig:p2_s1_1}). Assume for contradiction that $v$ is visible to $q$. When the pursuer moves from $p$ to $q$, the pursuer crosses a line of the cell decomposition (See Figure~\ref{fig:p2_s1_2}), so $p$ and $q$ are not in the same cell. 
\end{proof}

 \begin{figure}[h]
    \centering
    \includegraphics[width=0.4\columnwidth]{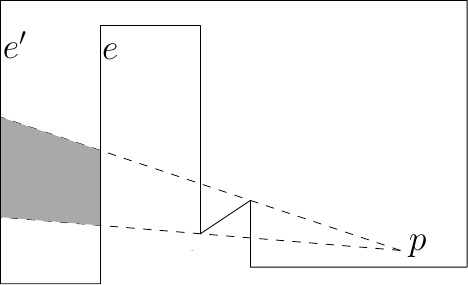}
    \caption{The shadow of type 2 (edge shadow) - a shadow that occurs between two edges}
    \label{fig:edge_shadow}
\end{figure}
\begin{figure}[h]
    \centering

    \begin{subfigure}[t]{.40\linewidth}
\includegraphics[width=\linewidth]{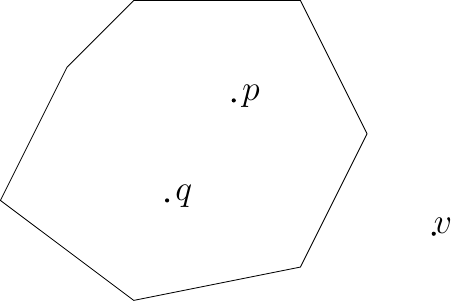}
\caption{A point $p$ and a point $q$ inside a cell $C$, and a vertex of a shadow of type 1}\label{fig:p2_s1_1}
\end{subfigure} \hfill
\begin{subfigure}[t]{.40\linewidth}
\includegraphics[width=\linewidth]{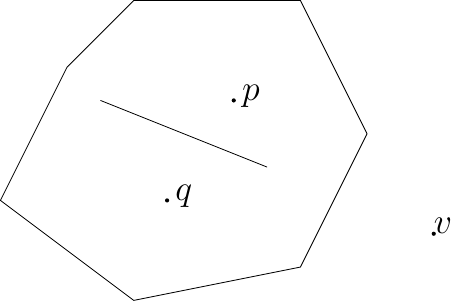}
\caption{A partition line between the vertex $p$ and vertex $q$}\label{fig:p2_s1_2}
\end{subfigure}

\caption{Figures for Theorem 1}
    \label{fig:theorem1}



\label{fig:}
\end{figure}







Before proceeding to the second theorem, we first introduce the necessary definitions and establish several supporting lemmas.

\roE{Consider two points, $p$ and $q$, located within the same cell $C$ of the decomposition, which denote the  position of an agent at two different times as it moves within the polygon (See Figure~\ref{fig:shadow_of_p}). There are two edges of the polygon $e$ and $e'$, with $e'$ being the next wall after $e$ on the ray $pp'$. From the point $p$, there is a shadow of type 2 between these two edges. Two rays, $pp'$ and $pp''$, define this shadow ($p'$ and $p''$ being points on $e'$).}
\roE{Consider a ray emanating from $q$ which rotates counterclockwise around $q$. Consider the first time this ray hits both $e$ and $e'$. The intersection of this ray with $e$ is $q'$. The last time this ray intersects both $e$ and $e'$ it intersects $e'$ at $q''$.}

We assign a coordinate system in which the point $p$ is at the origin $(0,0)$ and the point $q$ is on the $y$-axis. 
Consider intersection points which further describe the geometric layout: Let $t$ be the point that is the intersection between the line $qq''$ and $pp'$. Let $s$ be the intersection between the line $qq''$ and $pp''$. This intersection points help with definitions in the proof. \roB{There exists a vertex, $m$, on the segment $pp'$ that is critical to $p$.} Call one of the edges of vertex $m$, the edge of $m$ that makes the smallest angle with the $x$-axis (as defined by the coordinate system) $e_m$, as in Figure~\ref{fig:vertex_m}.


\begin{figure}[h]
    \centering
    \includegraphics[width=0.55\columnwidth]{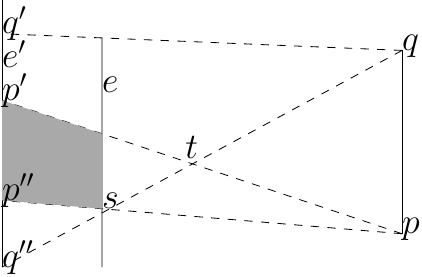}
    \caption{The shadow of $p$. The intersection points $t$ and $s$ with $qq''$}
    \label{fig:shadow_of_p}
\end{figure}

\begin{figure}[h]
    \centering
    \includegraphics[width=0.55\columnwidth]{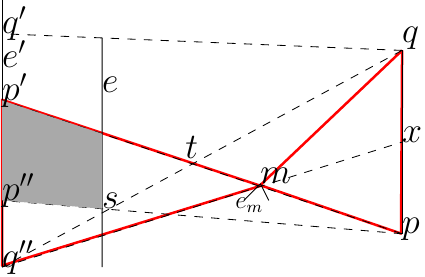}
    \caption{The existence of a vertex, $m$, on $pp'$ which `starts' the shadow of $p$}
    \label{fig:vertex_m}
\end{figure}

\begin{lemma} There exists a vertex, $m$, on the segment $pp'$, with both edges lying below $pp'$. The vertex $m$ is critical to $p$.
\end{lemma}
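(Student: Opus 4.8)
The plan is to characterize the boundary ray $pp'$ of the type~2 shadow and to show that a shadow boundary can only arise where the line of sight grazes a polygon vertex. First I would set up a sweeping-ray argument: with $p$ at the origin, let a ray rotate about $p$ and record $c(\theta)$, the number of times the ray at angle $\theta$ crosses the boundary of $P$. A point on the ray lies in shadow precisely when the number of crossings before reaching it exceeds $k$, so the two rays $pp'$ and $pp''$ that delimit the edge shadow are exactly the angles at which the crossing count transitions across the threshold $k$. The key structural observation is that $c(\theta)$ is piecewise constant and changes only when the ray passes through a vertex of $P$.

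Next I would analyze the two ways the ray can pass through a vertex. If a vertex has its two incident edges on opposite sides of the line through $p$ and that vertex (a non-critical vertex), then as the ray sweeps past it one edge leaves the intersection set while the other enters, so $c(\theta)$ is unchanged and no shadow boundary appears there. If instead both incident edges lie on the same side (a critical vertex, in the sense of the definition above), then two crossings are created or destroyed simultaneously and $c(\theta)$ jumps by $\pm 2$. Since the shadow boundary is precisely where $c$ crosses the even threshold $k$, the ray $pp'$ must pass through a critical vertex; call it $m$. This establishes that $m$ is critical to $p$.

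Then I would locate $m$ on the segment $pp'$. Because the shadow is of type~2 and lies between $e$ and $e'$, with $e'$ the wall immediately following $e$ along $pp'$, the grazing vertex responsible for the transition is an endpoint of the blocking wall $e$, and therefore lies on the ray before $e'$, that is, strictly between $p$ and $p'$. Finally, to conclude that both edges of $m$ lie below $pp'$, I would invoke the coordinate convention in which $q$ sits on the positive $y$-axis together with the labelling of $pp'$ as the lower delimiting ray of the shadow: since $m$ is critical, its two incident edges already lie on a single side of the line $pp'$, and the chosen orientation forces this side to be the lower one, which is consistent with the earlier definition of $e_m$ as the incident edge making the smallest angle with the $x$-axis.

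The step I expect to be the main obstacle is the crossing-count analysis at a vertex for general even $k$: I must argue carefully that passing a transversal (non-critical) vertex leaves $c(\theta)$ invariant while passing a critical vertex changes it by exactly two, and that it is this even jump across the even threshold $k$ -- rather than a single transversal crossing -- that is forced at a shadow boundary. Pinning down that the responsible critical vertex is the one encountered along $pp'$ (rather than along $pp''$ or elsewhere during the sweep), and that it lies strictly between $p$ and $p'$, will likewise require careful bookkeeping of the orientation and of which wall casts the shadow.
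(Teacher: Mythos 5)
Your proposal is correct and takes essentially the same approach as the paper: the paper's own proof is just a two-sentence assertion that, since $pp'$ bounds the shadow, some vertex along the segment must obstruct visibility and have both incident edges below $pp'$, and your crossing-count sweep (parity jump of $\pm 2$ at critical vertices across the even threshold $k$, no change at transversal vertices) is the rigorous justification of exactly that assertion. One caveat: your identification of $m$ as an endpoint of the blocking wall $e$ is neither needed nor true in general---the responsible critical vertex may belong to other walls crossed before $p'$---but this does not matter, since your sweep argument already localizes $m$ to the segment $pp'$ without it.
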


\begin{proof}

\roE{Since the ray $pp'$ is responsible for forming a part of the shadow of $p$, it means that somewhere along this segment, there is a vertex which obstructs the visibility from the point $p$. This vertex has both of its adjacent edges below segment $pp'$.}

\end{proof}

The next 
Lemma will be established through proof by contradiction, as follows: 
we first consider that $p$ has a shadow of Type 2 between edges $e$ and $e'$. Moreover, \roE{for appear/disappear event,} we suppose that the area between edges $e$ and $e'$ – the interior of the triangle $qq'q''$ that is between $e$ and $e'$ – is visible to point $q$. 



    \begin{figure}
\centering
\begin{subfigure}[b]{.40\linewidth}
\includegraphics[width=\linewidth]{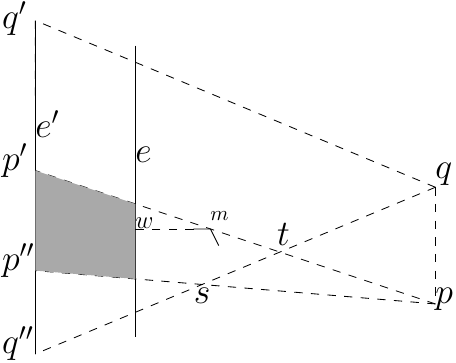}
\caption{Case 1}\label{fig:case1}
\end{subfigure}
\begin{subfigure}[b]{.40\linewidth}
\includegraphics[width=\linewidth]{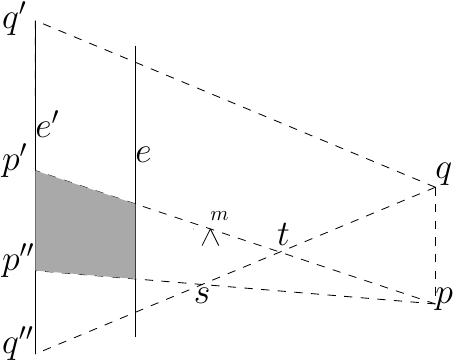}
\caption{Case 2}\label{fig:case2}
\end{subfigure}

\begin{subfigure}[b]{.40\linewidth}
\includegraphics[width=\linewidth]{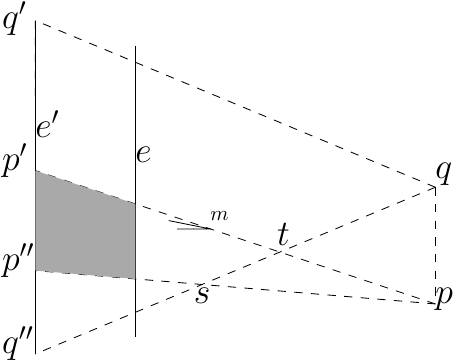}
\caption{Case 3}\label{fig:case6}
\end{subfigure}
\begin{subfigure}[b]{.40\linewidth}
\includegraphics[width=\linewidth]{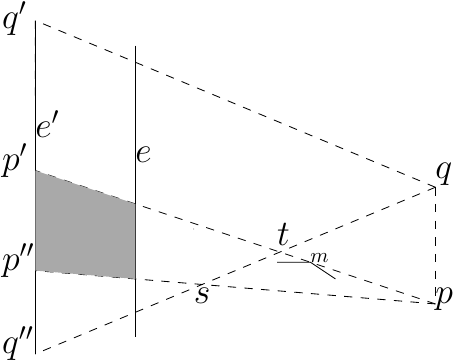}
\caption{Case 4}\label{fig:case3}
\end{subfigure}

\begin{subfigure}[b]{.40\linewidth}
\includegraphics[width=\linewidth]{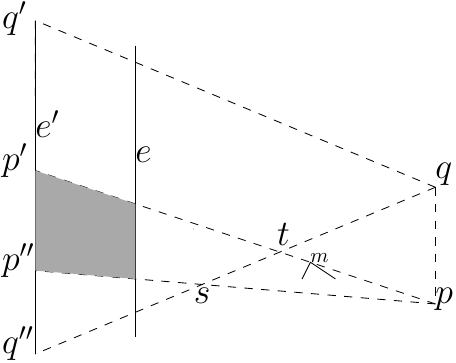}
\caption{Case 5}\label{fig:case4}
\end{subfigure}
\begin{subfigure}[b]{.40\linewidth}
\includegraphics[width=\linewidth]{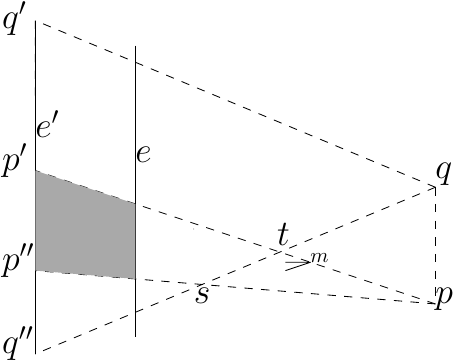}
\caption{Case 6}\label{fig:case5}
\end{subfigure}
\caption{The Six Cases for Theorem 2}
\label{fig:six_cases}
\end{figure}

\begin{figure}[h]
    \centering
    \includegraphics[width=0.30\columnwidth]{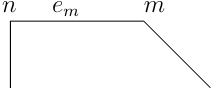}
    \caption{The other vertex of edge $e_m$, `$n$'}
    \label{fig:other_vertex_of_m}
\end{figure}

\begin{lemma}
\label{lemma2}
\roB{If a $k$-modem that is in a cell $C$ of the decomposition and with a shadow of type 2 moves continuously inside the cell, appear and disappear events of  the shadow regions of type 2 may not happen.  }
\end{lemma}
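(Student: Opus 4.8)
The plan is to argue by contradiction, mirroring the structure of Theorem~1 but using the critical vertices that bound an edge shadow in place of the single hidden vertex used there. Since an appear event for the moving agent is exactly a disappear event once the roles of $p$ and $q$ are exchanged, and both $p,q$ lie in the same cell $C$, it suffices to handle the disappear case and invoke this symmetry. So I would assume that $p$ and $q$ lie in the common cell $C$, that $p$ has a type~2 shadow between edges $e$ and $e'$, and---for contradiction---that the portion of the polygon between $e$ and $e'$ cut out by the triangle $qq'q''$ is visible to $q$. The goal is then to exhibit a partition line strictly separating $p$ from $q$, contradicting $p,q\in C$.

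First I would reduce the statement to a statement about wall-crossing counts. A point in the region between $e$ and $e'$ is in the shadow of $p$ precisely when the segment from $p$ to it meets the polygon boundary more than $k$ times, and the transition from at most $k$ to more than $k$ crossings along the ray $pp'$ occurs exactly at the critical vertex $m$ supplied by the preceding lemma, whose two adjacent edges lie below $pp'$. Consequently $m$, together with the edge $e_m$ and its far endpoint $n$, carries all of the combinatorial information that determines the shadow. The key observation is that the line through $m$ determined by $e_m$---and, in the degenerate subconfigurations, the sight line $\overline{mn}$ or the $j$-visibility window through $m$ for the appropriate even $j\le k$---is a partition line of the decomposition, since it is a bounding line of an even-valued visibility polygon taken from a \emph{vertex} of $P$.

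Next I would carry out the case analysis of Figure~\ref{fig:six_cases}. The six cases arise from classifying the position of $q$ relative to the rays $pp'$ and $pp''$, the line $qq''$ (equivalently, relative to the intersection points $t$ and $s$), and the edge $e_m$ extended through $n$. In each case the hypothesis that $q$ sees the triangle $qq'q''$ forces the crossing count from $q$ into that region to be at most $k$, whereas the count from $p$ exceeds $k$; I would show that this change of sign can only occur if $q$ lies strictly on the opposite side, from $p$, of one specific line among those identified above. Tracking which vertex ($m$ or $n$) and which even visibility level $j$ is responsible in each case certifies the separating line as a genuine partition line, producing the contradiction and closing that case.

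The main obstacle, and the point where the generalization beyond $k=0,2$ genuinely bites, is the bookkeeping of crossing counts: for general even $k$ the shadow boundary between $e$ and $e'$ is a $j$-visibility window rather than an ordinary ($0$-)visibility window, so I must verify that the line witnessing the change of shadow status is exactly a line of the $j$-visibility polygon, for the correct even $j$, emanating from $m$ or $n$, and not some intermediate ray that the decomposition never draws. A secondary difficulty is confirming that the six listed cases are exhaustive---that no placement of $q$ consistent with $p$ and $q$ sharing a cell escapes the classification---which I would settle by a short angular-sweep argument around $q$, matching the counterclockwise sweep used to define $q'$ and $q''$.
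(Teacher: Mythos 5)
Your high-level strategy coincides with the paper's --- proof by contradiction, the critical vertex $m$ on $pp'$ supplied by the preceding lemma, the six-case figure, and the goal of exhibiting a partition line separating $p$ from $q$ --- but the step that carries the entire mathematical weight is missing, and you flag it yourself. The claim you defer (``this change of sign can only occur if $q$ lies strictly on the opposite side of one specific line among those identified above,'' together with the certification that this line is ``exactly a line of the $j$-visibility polygon \ldots and not some intermediate ray that the decomposition never draws'') \emph{is} the lemma; a plan that postpones it has proved nothing. The paper closes precisely this hole with a device your proposal lacks: assume for contradiction that \emph{no} vertex in $\triangle mpq$ or $\triangle mp'q''$ is critical to $m$ and forms a partition line with $m$; then, as $pp'$ is rotated around $m$ toward $q$, every critical vertex encountered (including $n$) has $k$ or fewer walls on both sides of the rotated line, in all six cases; hence fewer than $k$ walls lie below the unrotated $pp'$, contradicting that $pp'$ bounds a type~2 shadow of $p$. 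This rotation-and-wall-counting argument is the missing idea: it converts ``no separating partition line exists'' directly into ``$pp'$ is not a shadow boundary,'' with no need to identify which particular line is the witness.

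Relatedly, your candidate set of witnesses is too narrow. You restrict attention to the line supporting $e_m$ (which is the same line as $\overline{mn}$, since $n$ is the other endpoint of $e_m$) and, in ``degenerate subconfigurations,'' visibility windows through $m$ or $n$. But the separating partition line may be a window of $m$'s $j$-visibility polygon through some \emph{other} vertex critical to $m$ lying inside $\triangle mpq$ or $\triangle mp'q''$. The paper's contradiction hypothesis quantifies over all vertices critical to $m$ in those two triangles and never needs to single one out, whereas your case analysis, even if fully executed, would certify only lines built from $m$, $e_m$, and $n$, and so would fail on configurations where a different critical vertex inside the triangles is the one responsible for the shadow transition.
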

\begin{proof}
    
\roE{The proof is by contradiction.} Assume\roE{, for contradiction,} that there is no vertex in $\triangle{mpq}$ or in $\triangle{mp'q''}$ (\roE{See the two red triangles in Figure.~\ref{fig:vertex_m}}) that is critical to $m$ and forms a partition line with $m$. 
This means that as $pp'$ is rotated around $m$ towards $q$, at every vertex that is critical to $m$, including the other vertex of $e_m$ (that is not $m$, see Figure~\ref{fig:other_vertex_of_m}), there are $k$ or less walls on both the right and left sides of the rotated line, \roE{regardless of which of the six cases we are in (See Figure~\ref{fig:six_cases}}). Thus, there are always less than $k$ walls below $pp'$ (unrotated), so $pp'$ is not the boundary of the shadow of $p$, a contradiction.
\roE{Note that case 3 is a subcase of 1 and case 6 is a subcase of 4.}

\end{proof}

\begin{lemma}
    The $k$-visibility polygon drawn from each vertex ensures that no merge or split occurs \roF{when} going from one cell to another.
\end{lemma}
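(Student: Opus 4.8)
The plan is to argue by contradiction, in the same style as Theorem~1 and Lemma~\ref{lemma2}: I would assume that a merge (equivalently, by time-reversal, a split) of shadow regions occurs while the agent moves continuously from $p$ to $q$ inside a single cell $C$, and then derive that the agent must in fact have crossed a partition line, so that $p$ and $q$ cannot lie in the same cell.

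First I would characterize the critical configuration of a merge. Let $S_1$ and $S_2$ be the two shadow regions that merge, and let the visible corridor separating them be bounded by two lines of sight from the agent that graze two critical (reflex) vertices $u$ and $v$ of $P$. As the agent moves, this corridor narrows; at the instant of the merge it pinches to zero width, which happens exactly when the agent, $u$, and $v$ become collinear, so that the opening $uv$ is seen edge-on and the region behind it is no longer $k$-visible. Thus every merge is witnessed by an agent position collinear with two polygon vertices $u$ and $v$, along a line whose wall-crossing count attains one of the even thresholds $0, 2, \ldots, k$ used in the construction (the even parity being forced by the fact that each reflex vertex contributes its two incident edges to the crossing count, which is precisely why only even-valued visibility polygons are needed).

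Next I would identify this critical line with a partition line. The segment through $u$ and $v$, extended to the polygon boundary, is exactly a window (edge) of the $m$-visibility polygon of $u$ for the appropriate even $m \le k$; by the construction of the decomposition, every such line is a partition line. Since $p$ is taken just before the event and $q$ just after it, the continuous trajectory of the agent passes through this collinear configuration and therefore crosses the corresponding partition line. Hence $p$ and $q$ lie on opposite sides of a partition line and cannot belong to the same cell $C$, contradicting the hypothesis. By the symmetry between merge and split, the same argument rules out splits, which completes the proof.

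The main obstacle I expect is the step that pins down the exact visibility level of the critical line: one must verify that the bitangent line realizing the merge always occurs at a wall-crossing count equal to one of the even values $0, 2, \ldots, k$ at which the visibility polygons were computed, so that the line is guaranteed to be among the partition lines rather than lying strictly between two recorded levels. Establishing this parity-and-count argument rigorously, rather than merely asserting the collinearity of the agent with $u$ and $v$, is the delicate part, and it is exactly where the restriction to even-valued $m$-visibility polygons does the work.
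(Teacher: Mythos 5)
Your overall skeleton --- assume a merge (or, by time reversal, a split) happens while the agent stays in one cell, and deduce that the trajectory must cross a partition line, a contradiction --- is the same as the paper's. But at the crucial step you take a different and harder route, and that is exactly where your argument has a genuine gap. You characterize the event by a bitangent configuration (the agent becomes collinear with two critical vertices $u$ and $v$) and then need the claim that this critical line has a wall-crossing count equal to one of the recorded even levels $0,2,\dots,k$, so that it is a window of the $m$-visibility polygon of $u$ for some even $m \le k$ and hence a partition line. You assert the parity (``each reflex vertex contributes its two incident edges to the crossing count'') but do not prove it, and you explicitly flag it as unresolved. As written, nothing rules out the event line being a window only at a level strictly between the recorded even ones; that verification is the entire nontrivial content of your version of the lemma, so the proof is incomplete at its central step.

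The paper's proof avoids this issue by keying the partition line to a single vertex instead of to a bitangent pair at a particular level. It observes that when one shadow splits into two (equivalently, two merge, reversing the motion), some vertex $c$ that was not visible from $p$ has become a new critical vertex visible from $q$; by the symmetry of $k$-visibility, $p$ lies outside the $k$-visibility polygon of $c$ while $q$ lies inside it, so the segment $pq$ must cross a line of that polygon --- which is a partition line of the decomposition by construction --- contradicting the assumption that $p$ and $q$ share a cell. No determination of the ``right'' even level is ever needed, because the decomposition already includes every line of the $k$-visibility polygon of every vertex, in particular of $c$. If you want to keep your geometric picture of the pinching corridor, you can repair your proof the same way: at the pinch, note that one of the two vertices (say $v$) changes its $k$-visibility status with respect to the agent, and invoke the lines of $v$'s own $k$-visibility polygon rather than trying to place the bitangent at a specific even level of $u$'s.
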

\begin{proof}
    If suddenly one shadow splits into two \roF{while} going from $p$ to $q$, it means that there is a new critical vertex $c$ that has appeared \roF{to} $q$ which was not visible \roF{at} $p$. As such, the partition line created by the $k$-visibility polygon of $c$ must intersect the segment $pq$. So $p$ and $q$ may not lie in the same cell of the decomposition, a contradiction.
\end{proof}



This concludes our final theorem, which is as follows: 

\begin{theorem}
    The $k$-cell decomposition guaranties that for a shadow of type 2 no geometric events (i.e., appear, disappear, merge, and split) will occur while a pursuer moves continuously in a cell.  
\end{theorem}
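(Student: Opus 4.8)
The plan is to assemble the final theorem directly from the two preceding lemmas, since together they already cover the complete list of geometric events that a type~2 shadow can undergo. The four events named in the statement—appear, disappear, merge, and split—are exhaustive by the characterization of shadow dynamics recalled in the background section, so it suffices to rule out each one for a shadow that lives strictly between two edges $e$ and $e'$ while the agent travels from $p$ to $q$ inside a single cell $C$.

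First I would invoke Lemma~\ref{lemma2} to dispose of the appear and disappear events: that lemma establishes, through the case analysis around the critical vertex $m$ on $pp'$, that no vertex critical to $m$ inside the triangles $\triangle mpq$ or $\triangle mp'q''$ can alter the wall count that bounds the shadow unless the path from $p$ to $q$ first crosses a partition line, which would place $p$ and $q$ in different cells. Next I would invoke the merge/split lemma, whose argument shows that any split would require a newly critical vertex $c$ to become visible to $q$ but not to $p$, forcing the $k$-visibility partition line of $c$ to separate $p$ from $q$ and again contradicting their common membership in $C$; the symmetric reading of the same argument rules out a merge. Listing the four events and citing the corresponding lemma for each then yields the theorem.

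The only genuine work inside the theorem statement itself is to confirm exhaustiveness and consistency. I would verify that \emph{appear/disappear} and \emph{merge/split} together partition every possible combinatorial change of a type~2 shadow, and that both lemmas are phrased over the same hypotheses—a $k$-modem moving continuously inside one cell $C$ with a type~2 shadow between $e$ and $e'$—so that they chain without a gap. The main obstacle is therefore not the combination step but an upstream one: guaranteeing that the enumeration behind Lemma~\ref{lemma2} is truly complete, that the six cases (together with the reductions of case~3 to case~1 and case~6 to case~4) leave no orientation of the rotating line unaccounted for. Once that completeness is taken as granted from the lemma, the final theorem follows immediately by the event-by-event citation described above.
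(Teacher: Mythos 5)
Your proposal is correct and matches the paper's own proof, which likewise establishes this theorem simply by combining the preceding lemmas (the critical-vertex lemma, the appear/disappear lemma, and the merge/split lemma) and noting that these events exhaust what can happen to a type~2 shadow. Your added attention to exhaustiveness and to the consistency of hypotheses across the lemmas is a sensible elaboration of the same argument, not a different route.
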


\begin{proof}
    This follows as a consequence of Lemmas 1, 2, and 3. \roD{Note that there are only two possible shadows - vertex shadow and edge shadow.}
\end{proof}

\begin{theorem}
    The proposed cell decomposition is complete. In other words, \roF{while} moving within a cell, no geometric events may happen \roF{to} the shadows. 
\end{theorem}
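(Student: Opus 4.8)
The plan is to obtain the completeness statement as a direct synthesis of the two shadow-type results already established, once the case split they rely on is shown to be exhaustive. First I would observe that, by the definitions of type 1 and type 2 shadows, the two classes are mutually exclusive and jointly exhaustive: any maximal invisible region $S_i(p)$ either contains at least one vertex of $P$, in which case it is a vertex shadow, or it contains none, in which case it is an edge shadow. Hence it suffices to show that none of the four events (appear, disappear, merge, split) can occur to a shadow of either class while the pursuer moves continuously from $p$ to $q$ within a single cell $C$.

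Next I would dispatch each class to the corresponding result. For type 1 shadows, Theorem 1 guarantees that the combinatorial representation of every vertex shadow is invariant throughout $C$; since each of the four events is by definition a discontinuous change in this combinatorial representation (a vertex switching visibility status, or two shadow boundaries becoming incident), invariance immediately precludes all four for vertex shadows. For type 2 shadows, the preceding theorem---itself assembled from Lemma 1 (existence of the critical vertex $m$ on $pp'$), Lemma 2 (exclusion of appear and disappear), and Lemma 3 (exclusion of merge and split)---already rules out all four events. Collecting these two facts over the exhaustive dichotomy yields that no event occurs to any individual shadow.

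The step I expect to require the most care is the interaction between the two classes, namely a merge that fuses a vertex shadow with an edge shadow, or a split that produces one shadow of each type; such mixed events are not literally covered by treating the classes in isolation. To close this gap I would reuse the mechanism of Lemma 3: any merge or split, regardless of the types involved, is triggered by some critical vertex $c$ whose visibility with respect to the moving agent changes along $pq$. The partition line associated with the $k$-visibility polygon of $c$ must then separate $p$ from $q$ and hence cross the segment $pq$, contradicting the assumption that $p$ and $q$ lie in the same cell $C$. Because the decomposition draws these partition lines from the $m$-visibility polygons of every vertex for all even $0 \le m \le k$, no such critical vertex can change status within $C$.

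Finally, I would conclude that since the dichotomy is exhaustive and each of the four events is excluded both for the pure classes and for the mixed case, the combinatorial structure of the entire shadow arrangement is stable throughout $C$; therefore the decomposition is complete. The principal obstacle is thus not any single computation but the bookkeeping that establishes exhaustiveness: verifying that the union of the type 1 argument, the type 2 argument, and the cross-type merge/split argument genuinely accounts for every way in which the shadow arrangement could change as the agent moves.
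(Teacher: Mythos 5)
Your proposal is correct and follows essentially the same route as the paper: the paper's own proof is a single sentence that cites Theorem~1 (invariance of vertex shadows) and Theorem~2 (no events for edge shadows), relying on the remark that vertex shadows and edge shadows are the only two kinds. Where you go beyond the paper is the third paragraph: the paper never addresses mixed events (a merge fusing a vertex shadow with an edge shadow, or a split producing one of each type), and your patch---reusing the critical-vertex mechanism of Lemma~3 to show that any such event forces a partition line across $pq$---closes a case the paper's one-line proof silently skips; this is added rigor within the same argument rather than a different approach, and it is a worthwhile strengthening of the completeness claim.
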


\begin{proof}
    This is proved by Theorem 1 and 2.
\end{proof}

\subsection{Computational Complexity}

\roE{The worst case complexity is when all vertices are critical for each other. As such, for a vertex $v$, there may exist $O(n)$ partition lines for each $i$ in ${0,2,4\dots,k}$. So, each vertex, there are $O(nk)$ partition lines. Overall there are $n$ vertices. As such the total number of partition lines is $O(kn^2)$. For a particular vertex, each partition line can intersect all the rest of the partition lines except the one emanating from the same vertex. Consequently, each partition line may intersect $O(kn^2-kn)$ other partition lines. Each vertex has $O(nk)$ partition lines. So, all the partition lines emanating from a particular vertex intersect $O(nk(kn^2-kn))$. We have overall $n$ vertices. As such the number of intersections (vertices of the cell decomposition) is $O(k^2n^4)$.}

\subsection{Applications}

\roE{The primary purpose of the cell decomposition  is to compute a path for one or multiple pursuers that guarantees detection of an intruder, and the same approach presented by Guibas et al. \cite{Guiba1997} works on the cell decomposition presented in this paper.}
We extend this concept for $k$-modem\roB{s} by building this graph on the $k$-cell decomposition. Using this, we propose \roF{an} algorithm that \roF{computes} a path for detecting intruders in a given polygon using a $k$-modem as the pursuer, if there exist\roF{s} any. 

\section{Conclusion} \label{sec:conclusions}  

This paper studied the pursuit-evasion problem under the $k$-visibility model, 
\roA{generalizing} existing work on $k=0$ and $k=2$ visibility-based decompositions. The focus of this research was to develop a cell decomposition that allows a pursuer equipped with a $k$-modem to navigate within a polygonal environment to detect an intruder. 
To do so, the environment is split into cells, ensuring that the combinatorial representation of the shadow regions remains unchanged as an agent moves within a cell. This means that none of the key visibility events, that is, merge, split, appear, or disappear occur as the pursuer moves within a cell of the decomposition. Future work could include reducing the number of lines \roF{in} the decomposition \ro{or path planning between two given points to avoid specific geometric event that might occur along the path}. 

\newcommand\blfootnote[1]{%
  \begingroup
  \renewcommand\thefootnote{}\footnote{#1}%
  \addtocounter{footnote}{-1}%
  \endgroup
}
\blfootnote{This work was supported by the Natural Sciences and Engineering Research Council of Canada (NSERC).}

%
%
%
%
\bibliographystyle{abbrv}

\bibliography{bibliography}
\end{document}